\documentclass[11pt,a4paper]{article}

\usepackage[margin=1in]{geometry}
\usepackage[T1]{fontenc}
\usepackage[utf8]{inputenc}
\usepackage{lmodern}
\usepackage{microtype}
\usepackage{graphicx}
\usepackage{subcaption}
\usepackage{booktabs}
\usepackage{multirow}
\usepackage{amsmath,amssymb,amsthm}
\usepackage{array}
\usepackage{enumitem}
\usepackage{hyperref}
\usepackage{float}
\usepackage{authblk}
\usepackage{siunitx}
\usepackage[backend=biber,style=numeric,sorting=none,maxbibnames=99]{biblatex}
\hypersetup{colorlinks=true,linkcolor=blue,citecolor=blue,urlcolor=blue}
\newtheorem{proposition}{Proposition}
\title{Local SVD-Entropy Maps as a Complementary Structural Representation for Full-Reference and No-Reference Image Quality Assessment}

\author[1]{Andrei Velichko\thanks{Corresponding author: \href{mailto:velichkogf@gmail.com}{\texttt{velichkogf@gmail.com}}; ORCID: \href{https://orcid.org/0000-0002-9341-1831}{0000-0002-9341-1831}}}
\author[1]{Petr Boriskov\thanks{E-mail: \href{mailto:boriskov@petrsu.ru}{\texttt{boriskov@petrsu.ru}}; ORCID: \href{https://orcid.org/0000-0002-2904-9612}{0000-0002-2904-9612}}}
\affil[1]{Institute of Physics and Technology, Petrozavodsk State University, 185910 Petrozavodsk, Russia}
\date{}

\begin{document}
\maketitle

\begin{abstract}
We investigate a local spectral-complexity representation for perceptual image quality assessment (IQA) based on Shannon entropy of singular values computed directly from two-dimensional image patches. For each $3\times3$-pixel grayscale patch, SVD is applied directly and the normalized singular-value entropy defines one HSVD-map value. The construction requires neither flattening nor delay embedding, uses no boundary padding, and is invariant to $90^{\circ}$ rotations and mirror reflections at the local-descriptor level.

A nested salt-and-pepper experiment on Lena separates absolute similarity to a clean reference from sensitivity to an additional degradation step. HSVD-SSIM responds more strongly to local corruption and retains a larger neighboring-state response at severe noise levels. Validation on all 10,125 distorted KADID-10k images shows that HSVD-SSIM is weaker than conventional SSIM as a standalone full-reference metric (SRCC $0.450$ vs. $0.619$), but complementary when combined with it: grouped cross-validation increases SRCC from $0.618$ to $0.659$, with a bootstrap 95\% confidence interval of $[0.036,0.046]$ for the gain. In a no-reference experiment, adding HSVD-derived single-image descriptors improves the best nonlinear model from SRCC $0.528$ to $0.575$ (95\% CI $[0.033,0.061]$) and also improves prediction of quality changes between neighboring distortion states.

These results support direct local SVD entropy as an interpretable structural channel that complements conventional image-domain similarity and remains informative without a pristine reference.
\end{abstract}

\noindent\textbf{Keywords:} image quality assessment; SSIM; SVD entropy; entropy map; KADID-10k; full-reference IQA; no-reference IQA; structural degradation.

\section{Introduction}
Perceptual image quality assessment (IQA) seeks objective measures that reproduce, as closely as possible, human judgments of visual quality. Depending on the availability of reference information, IQA methods are commonly divided into full-reference (FR), reduced-reference (RR), and no-reference (NR) approaches \cite{Hosu2020KonIQ,Dost2022RRReview,Zhang2019BlindIQA,Liu2023DeepBlind}. FR methods are advantageous when a pristine image is available, whereas NR methods address the more difficult and practically important situation in which quality must be inferred from the observed image alone. Despite the rapid development of deep IQA models, subjective opinion studies remain the perceptual ground truth used to train and benchmark objective measures, and large-scale databases together with content-separated evaluation protocols remain essential for assessing generalization \cite{Hosu2020KonIQ,Min2024Subjective,Frackiewicz2024Combined}. Objective IQA therefore remains relevant for image acquisition, compression, transmission, restoration, enhancement, and automated image-processing systems in which repeated human evaluation is impractical.

The Structural Similarity Index (SSIM) became one of the most influential FR-IQA approaches by shifting the emphasis from pixelwise error visibility toward preservation of local luminance, contrast, and structure \cite{Wang2004SSIM}. Its multiscale extension (MS-SSIM) incorporated structural comparison across resolutions \cite{Wang2003MSSSIM}, while information-theoretic and feature-based alternatives such as VIF, information-weighted SSIM, and FSIM demonstrated that perceptual fidelity can benefit from cues beyond a single local structural statistic \cite{Sheikh2006VIF,Wang2011IWSSIM,Zhang2011FSIM}. SSIM nevertheless remains an important and interpretable baseline because of its simplicity, locality, and direct relation to spatial structure.

SSIM is not a complete model of visual perception. Its behavior depends on implementation choices, local windowing, pooling, image content, and distortion regime \cite{Venkataramanan2021SSIMGuide,Bakurov2021SSIM}. Task-specific variants have therefore been proposed for intensity-sensitive or specialized imaging conditions \cite{Armour2022IntensitySSIM,Milovic2024XSIM}, while spatial misalignment has motivated comparisons in alternative feature spaces \cite{Liao2024OrderStats}. Medical-imaging studies likewise show that the most informative quality measure can depend on the application \cite{Mason2020MRI}. These results motivate complementary representations when a single image-domain score does not capture all perceptually relevant manifestations of degradation.

This premise has led to a well-established line of \emph{complementary} and multi-domain IQA models. FSIM combines phase congruency and gradient magnitude because the two features describe different perceptual properties \cite{Zhang2011FSIM}; MDSI combines gradient and chromaticity information \cite{Nafchi2016MDSI}; spatial-frequency and spatial-transformed-domain methods fuse features extracted from distinct representations \cite{Tang2019SpatialFrequency,Yu2020MultiFeature}; and optimization-based approaches explicitly combine multiple quality measures to increase agreement with subjective scores \cite{Varga2022MetricFusion,Varga2023FusionModels}. Genetic-programming and dual-space methods provide further evidence that the quality signal available from different metrics or representation spaces need not be fully redundant \cite{Bakurov2023GP,Wu2024DualSpace}. The same principle appears in learned representations: DISTS unifies structure and texture similarity in deep feature space \cite{Ding2020DISTS}, while recent blind-IQA work has shown that shape- and texture-biased deep statistics contribute differently to quality prediction \cite{Li2024ShapeTexture}. Thus, complementarity rather than standalone superiority is a well-established strategy in modern IQA.

The problem of complementary IQA is also related to the distinction between \emph{absolute quality} and \emph{sensitivity to a change between two image states}. Pairwise and triplet studies can resolve small quality differences that are difficult to express on coarse absolute scales \cite{Men2021Pairwise}, while restoration-oriented benchmarks such as PIPAL show that denoising, super-resolution, and generative restoration can create perceptually important artifacts not summarized by traditional distortion measures alone \cite{Gu2020PIPAL}. A representation may therefore be highly responsive to an additional local degradation step without being the best standalone predictor of absolute perceptual quality. We evaluate these properties separately.

Entropy-based descriptors are useful in this complementary viewpoint because they characterize irregularity, complexity, or information distribution rather than directly comparing pixel intensities. Spatial and spectral entropy features have already been used for NR-IQA \cite{Liu2014EntropyIQA}, and local entropy representations have been related to perceptual quality and spatial image structure \cite{Chen2019EntropyIQA}. More recent work continues to combine local information entropy with learned IQA models \cite{Zhang2025EntropyViT}. At a broader image-analysis level, explicitly two-dimensional entropy formulations have been developed to retain spatial relationships that can be weakened when an image neighborhood is reduced to a one-dimensional sequence \cite{Abid2025IncrEn2D}.

Our own earlier work followed this latter route. Velichko et al. constructed two-dimensional entropy distributions by scanning local circular image kernels, converting each two-dimensional neighborhood into a specially ordered one-dimensional sequence, and then evaluating SVD, sample, permutation, or neural-network entropy on the resulting series \cite{Velichko2022EntropyApprox}. The circular traversal was introduced specifically to improve stability of the entropy result after image rotation. For SvdEn2D, the resulting one-dimensional series was subsequently embedded before singular-value entropy was evaluated. This approach demonstrated that spatial entropy maps can provide useful irregularity representations, but it still required a designed traversal rule and an intermediate sequence/embedding construction.

The formulation studied here is a methodological simplification of that earlier idea. Instead of attempting to design a scan order that is approximately insensitive to orientation, we remove the scan order altogether and compute SVD directly from the raw two-dimensional local patch. The normalized Shannon entropy of the patch singular values then defines one scalar local-complexity value. Because the singular-value spectrum is unchanged by transposition and left/right multiplication by orthogonal permutation matrices, invariance of the local descriptor to $90^{\circ}$ rotations and mirror reflections follows directly from matrix algebra rather than from the choice of a traversal heuristic. The direct formulation is also algorithmically simpler because sequence construction and delay embedding are no longer required. We do not, however, claim a measured speed advantage over the earlier implementation in this paper because the two algorithms have not been benchmarked under identical computational conditions.

SVD itself is well established in image processing. Low-rank and patch-based SVD methods exploit the concentration of structural energy in dominant singular directions for denoising and restoration \cite{Guo2016SVDDenoising}, while SVD entropy has been used to monitor residual rendering noise and image complexity \cite{Buisine2021SVDEntropy}. SVD-derived IQA models include structural-SVD descriptors \cite{Mansouri2019SSVD}, quaternion-SVD formulations for color IQA \cite{Sang2020QuaternionSVD}, and local patch-structure representations \cite{Wang2015PatchStructure}; singular-value-based patch similarity has also been used in denoising \cite{Wang2021SVDPatchSimilarity}. To our knowledge, however, direct local 2-D SVD-entropy maps subsequently compared by SSIM have not been established as an FR-IQA representation. The proposed HSVD map therefore lies between local SVD descriptors, spatial entropy maps, and structural-similarity-based IQA.

The same question is relevant when no pristine reference is available. Classical NR-IQA uses natural-scene statistics, texture, gradients, transform coefficients, entropy, and other interpretable low-dimensional descriptors, and recent feature-fusion methods continue to combine heterogeneous families of such cues \cite{Varga2020NRFusion,Cui2020DualDomainNR,Varga2022LocalDescriptors,Varga2023GlobalLocal}. Singular-value-derived statistics have also been combined with wavelet, DCT, moment, and other global descriptors \cite{Varga2021GlobalStats}. Deep NR-IQA generally achieves stronger headline performance, but hybrid statistical/deep formulations reinforce the broader idea that low- and high-level cues can be complementary \cite{Zhang2019BlindIQA,Ni2024MDFS}. Here the NR experiment is used specifically to test whether HSVD statistics retain useful quality information when the reference image is absent.

Table~\ref{tab:related_intro} summarizes the main literature lines that motivate the present study and clarifies how they differ from the proposed representation.

\begin{table}[H]
\centering
\caption{Representative literature lines motivating the present work and their relation to the proposed HSVD representation.}
\label{tab:related_intro}
\footnotesize
\begin{tabular}{p{2.35cm}p{3.35cm}p{4.0cm}p{4.25cm}}
\toprule
Approach & Representation or cues & Main idea & Relation to the present work \\
\midrule
SSIM / MS-SSIM \cite{Wang2004SSIM,Wang2003MSSSIM} & Local luminance, contrast, structure; multiple scales & Direct structural comparison in the image domain & Baseline structural similarity used throughout the FR experiments \\
Complementary FR-IQA \cite{Zhang2011FSIM,Nafchi2016MDSI,Ding2020DISTS} & Gradient, phase, color, structure, texture, learned features & Combine complementary perceptual cues & Establishes complementarity as a valid alternative to ``replacement'' claims \\
Pairwise / restoration quality \cite{Men2021Pairwise,Gu2020PIPAL} & Relative human judgments; restoration-specific artifacts & Resolve small quality differences and complex restoration degradations & Motivates separate analysis of absolute quality and state-to-state sensitivity \\
Entropy-based IQA \cite{Liu2014EntropyIQA,Chen2019EntropyIQA} & Spatial, spectral, and local entropy & Describe irregularity and local complexity rather than only pixel fidelity & Motivates entropy as an IQA-relevant representation \\
Earlier 2-D entropy maps \cite{Velichko2022EntropyApprox} & Circular 2-D kernel $\rightarrow$ ordered 1-D sequence $\rightarrow$ entropy & Build spatial irregularity maps with rotation-stable traversal & Direct methodological precursor; the present method removes traversal and embedding \\
SVD-based IQA / patch analysis \cite{Mansouri2019SSVD,Sang2020QuaternionSVD,Wang2015PatchStructure} & Singular-value and patch-structure descriptors & Use local or global matrix structure for distortion analysis & Supports singular values as structural features but not the same local HSVD-map construction \\
Interpretable NR-IQA fusion \cite{Varga2020NRFusion,Cui2020DualDomainNR,Varga2023GlobalLocal} & Statistical, perceptual, texture, entropy, transform features & Fuse heterogeneous single-image descriptors & Motivates testing whether HSVD adds useful no-reference information \\
\textbf{Present work} & \textbf{Direct $3\times3$ 2-D SVD-entropy map + SSIM / HSVD statistics} & \textbf{Rotation/reflection-invariant local complexity as an additional structural channel} & \textbf{Absolute FR, incremental, complementary FR, and NR validation on KADID-10k} \\
\bottomrule
\end{tabular}
\end{table}

In the present study we ask whether direct local SVD entropy captures a component of structural degradation that is not fully represented by conventional image-domain similarity. Four linked tests are considered: (1) standalone FR quality prediction; (2) sensitivity to incremental degradation between neighboring image states; (3) complementary information when HSVD-SSIM is combined with conventional SSIM; and (4) no-reference quality prediction from HSVD-derived single-image statistics. KADID-10k is suitable for these tests because it provides 81 source images, 25 distortion types, five severity levels, and subjective quality annotations for 10,125 distorted images \cite{Lin2019KADID}.

The study therefore combines an interpretable proof-of-behavior experiment with large-scale subjective-data validation. Nested salt-and-pepper corruption of Lena is used only to expose the physical response of the two representations: conventional SSIM provides a smoother clean-reference response, whereas HSVD similarity retains stronger sensitivity to successive local changes once both states are already severely degraded. KADID-10k is then used to determine whether this behavior carries perceptual information across many contents and distortions. The analysis includes standalone FR correlations, neighboring-level change correlations, content-grouped cross-validation, reference-level bootstrap resampling, partial/incremental information tests through combined prediction, and an NR experiment in which the pristine image is never used as an input feature.

The main contributions of this work can be summarized as follows. First, we introduce a direct local two-dimensional SVD-entropy map computed from valid $3\times3$-pixel patches without flattening, delay embedding, or artificial boundary padding, and we explicitly establish right-angle rotation and reflection invariance of the local descriptor. Second, we distinguish absolute quality prediction from incremental state-to-state sensitivity and show why these properties should not be conflated. Third, we perform a large-scale KADID-10k validation showing that HSVD-SSIM is weaker than conventional SSIM as a standalone FR metric but provides statistically stable complementary information when the two are combined. Fourth, we demonstrate that HSVD-derived statistics also add useful information in a no-reference setting and improve the recovery of perceptual quality changes from independently predicted image-quality scores. Finally, by comparing the direct formulation with our earlier circular-kernel entropy mapping, we clarify the methodological progression from traversal-designed rotational stability to an exact matrix-invariance property obtained without traversal.

The remainder of the paper is organized as follows. Section~2 defines the direct HSVD map, explains valid-window boundary handling, proves its rotation/reflection invariance, and introduces the FR and NR descriptors. Section~3 describes the nested-noise Lena experiment and the four KADID-10k studies. Section~4 reports the experimental results. Section~5 discusses the interpretation of sensitivity, complementarity, and no-reference utility. Section~6 summarizes limitations and directions for further work, and Section~7 concludes the paper.

\section{Method}
\subsection{Direct two-dimensional local SVD entropy}
Let $X\in\mathbb{R}^{M\times N}$ denote a grayscale image normalized to $[0,1]$. At each valid spatial position, a square patch $P_{ij}\in\mathbb{R}^{w\times w}$ is extracted. Unless otherwise stated, $w=3$. The singular value decomposition of the patch is
\begin{equation}
P_{ij}=U_{ij}\Sigma_{ij}V_{ij}^{\top},
\end{equation}
where the diagonal elements of $\Sigma_{ij}$ are the nonnegative singular values $\sigma_1,\ldots,\sigma_w$.

The singular values are converted into a normalized distribution
\begin{equation}
p_k=\frac{\sigma_k}{\sum_{\ell=1}^{w}\sigma_\ell},\qquad k=1,\ldots,w,
\end{equation}
and the normalized SVD entropy is defined as
\begin{equation}
H_{\mathrm{SVD}}(P_{ij})=-\frac{1}{\ln w}\sum_{k=1}^{w}p_k\ln p_k,
\label{eq:hsvd}
\end{equation}
with the usual convention $0\ln0=0$. Thus $H_{\mathrm{SVD}}\in[0,1]$. A locally rank-one patch has entropy close to zero, whereas a patch whose singular values are more evenly distributed has a larger entropy.

Applying Eq.~\eqref{eq:hsvd} to all locations for which the complete $w\times w$ patch is available yields the HSVD map
\begin{equation}
E(X)=\left(H_{\mathrm{SVD}}(P_{ij})\right)_{i,j}.
\label{eq:hsvdmap}
\end{equation}
No boundary padding is used. Consequently,
\begin{equation}
E(X)\in\mathbb{R}^{(M-w+1)\times(N-w+1)}.
\end{equation}
For example, a $512\times512$ image produces a $510\times510$ HSVD map when $w=3$. This valid-window formulation avoids introducing artificial reflected, replicated, or zero-valued pixels at the image boundary.

\subsection{Rotation and reflection invariance of the local descriptor}
The direct two-dimensional formulation has an invariance property that is lost when a patch is first flattened into an arbitrary one-dimensional scan order.

\begin{proposition}
Let $P\in\mathbb{R}^{w\times w}$ and let $\widetilde{P}$ be obtained from $P$ by a $90^{\circ}$ rotation or a mirror reflection. Then
\begin{equation}
H_{\mathrm{SVD}}(\widetilde{P})=H_{\mathrm{SVD}}(P).
\end{equation}
\end{proposition}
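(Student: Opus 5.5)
The plan is to show that $H_{\mathrm{SVD}}$ depends on $P$ only through its multiset of singular values. Then it suffices to check that a right-angle rotation or a mirror reflection leaves that multiset unchanged. The map $P\mapsto\widetilde{P}$ will be written as an explicit matrix operation built from the reversal (exchange) permutation matrix $J\in\mathbb{R}^{w\times w}$, with $J_{k\ell}=1$ if $k+\ell=w+1$ and $0$ otherwise. $J$ is a permutation matrix, so it is orthogonal and symmetric, with $J^{\top}J=I$.

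First I express each operation. A horizontal mirror (left--right flip) is $\widetilde{P}=PJ$, and a vertical mirror (up--down flip) is $\widetilde{P}=JP$. A $90^{\circ}$ counterclockwise rotation is $\widetilde{P}=JP^{\top}$, and a clockwise rotation is $\widetilde{P}=P^{\top}J$; these are checked entrywise by tracking where index $(k,\ell)$ is sent. Reflections across the main diagonal and the anti-diagonal are $P^{\top}$ and $JP^{\top}J$. Rotations by $180^{\circ}$ or $270^{\circ}$ are compositions of these, so it is enough to handle the generators $P\mapsto P^{\top}$ and $P\mapsto QPR$ with $Q,R$ orthogonal.

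Next comes the invariance of singular values. If $P=U\Sigma V^{\top}$, then $P^{\top}=V\Sigma U^{\top}$ is an SVD of the transpose with the same $\Sigma$, since $\Sigma$ is square and diagonal. For orthogonal $Q,R$, $QPR=(QU)\Sigma(R^{\top}V)^{\top}$, and $QU$ and $R^{\top}V$ are again orthogonal. By uniqueness of singular values (they are the square roots of the eigenvalues of $P^{\top}P$, and these eigenvalues are preserved under transposition and orthogonal similarity), this is an SVD of $QPR$ with the same $\Sigma$. Hence $\sigma_k(\widetilde{P})=\sigma_k(P)$ for all $k$.

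Finally, $p_k$ in the normalization formula and the sum in Eq.~\eqref{eq:hsvd} are symmetric functions of $(\sigma_1,\ldots,\sigma_w)$, so $H_{\mathrm{SVD}}(\widetilde{P})=H_{\mathrm{SVD}}(P)$. The only step needing care is the degenerate case $P=0$, where $\sum_\ell\sigma_\ell=0$ and $p_k$ is undefined. Any convention the paper adopts there (for example, $H_{\mathrm{SVD}}=0$) depends only on the singular values being all zero, and since $\widetilde{P}=0$ exactly when $P=0$, the equality still holds. The main bookkeeping obstacle is verifying the index formulas for the rotations; everything else is standard.
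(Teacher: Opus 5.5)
Your proposal is correct and follows essentially the same route as the paper's proof: write $\widetilde P$ as $Q_1PQ_2$ or $Q_1P^{\top}Q_2$ with orthogonal permutation matrices, then use the invariance of singular values under transposition and orthogonal multiplication. Your explicit exchange-matrix formulas and your treatment of the zero patch, where $p_k$ is undefined and which the paper does not address, add detail the paper leaves implicit.
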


\begin{proof}
A rotation or reflection of a square matrix can be represented by a combination of transposition and left/right multiplication by orthogonal permutation matrices, i.e., $\widetilde P=Q_1 P Q_2$ or $\widetilde P=Q_1 P^{\top}Q_2$. Singular values are invariant under transposition and under multiplication by orthogonal matrices. Therefore $P$ and $\widetilde P$ have the same singular values, the same normalized distribution $\{p_k\}$, and hence the same entropy in Eq.~\eqref{eq:hsvd}.
\end{proof}

Accordingly, the local HSVD value is insensitive to the orientation of the patch under right-angle rotations and reflections. At the image level, rotating the entire image rotates the HSVD map correspondingly while preserving its local values. This property should not be confused with registration invariance: if only one of two compared images is rotated relative to the other, their spatial correspondence changes and SSIM is not expected to remain unchanged.

A second useful property follows from the normalization of the singular values. For any nonzero scalar $c$,
\begin{equation}
H_{\mathrm{SVD}}(cP)=H_{\mathrm{SVD}}(P),
\end{equation}
because all singular values are multiplied by the same factor $|c|$ and the normalized probabilities $p_k$ remain unchanged. Thus the local descriptor is insensitive to uniform multiplicative scaling of a nonzero patch before clipping or other nonlinear operations. It is not, however, invariant to additive luminance offsets or arbitrary nonlinear contrast transformations. This separation between local spectral shape and absolute intensity is one reason to expect HSVD to complement, rather than duplicate, conventional SSIM.

\subsection{Image-domain and entropy-domain SSIM}
For two aligned grayscale images $X$ and $Y$, SSIM is evaluated locally from luminance means $\mu_X,\mu_Y$, standard deviations $\sigma_X,\sigma_Y$, and covariance $\sigma_{XY}$ as \cite{Wang2004SSIM}
\begin{equation}
SSIM(X,Y)=\frac{(2\mu_X\mu_Y+C_1)(2\sigma_{XY}+C_2)}{(\mu_X^2+\mu_Y^2+C_1)(\sigma_X^2+\sigma_Y^2+C_2)},
\label{eq:ssim}
\end{equation}
with local values pooled over the image. Conventional image-domain structural similarity is denoted
\begin{equation}
S_{\mathrm{img}}(X,Y)=SSIM(X,Y).
\end{equation}
Entropy-domain structural similarity uses exactly the same SSIM operator but applies it to the HSVD maps defined in Eq.~\eqref{eq:hsvdmap}:
\begin{equation}
S_{\mathrm{HSVD}}(X,Y)=SSIM\bigl(E(X),E(Y)\bigr).
\label{eq:hsvdssim}
\end{equation}
We refer to Eq.~\eqref{eq:hsvdssim} as \emph{HSVD-SSIM}. In the KADID experiments, both ordinary SSIM and HSVD-SSIM were calculated after converting the reference and distorted images to grayscale. The implementation used Gaussian weighting with $\sigma=1.5$, population covariance normalization (\texttt{use\_sample\_covariance=False}), and data range equal to one.

\subsection{No-reference HSVD descriptors}
In the NR experiments no pristine image is used. Instead, each distorted image $X$ is represented by two feature families motivated by interpretable statistical, texture, entropy, and transform-domain NR-IQA approaches \cite{Varga2020NRFusion,Cui2020DualDomainNR,Varga2021GlobalStats,Varga2022LocalDescriptors,Varga2023GlobalLocal}. The image-domain family includes global intensity statistics and quantiles, a 16-bin intensity histogram and its entropy, gradient statistics, Laplacian energy, horizontal and vertical nearest-neighbor correlations, saturation fractions, and an 8-pixel block-boundary discontinuity measure. The HSVD family applies analogous statistics to the map $E(X)$ defined in Eq.~\eqref{eq:hsvdmap}: mean, standard deviation, quantiles, a 16-bin HSVD histogram and entropy, HSVD-gradient statistics, horizontal and vertical map correlations, and block-boundary discontinuity. Crucially, reference-image identity, distortion type, and distortion level are not used as predictive features.

\section{Experimental design}
\subsection{Illustrative nested salt-and-pepper-noise experiment}
The first experiment is intended to explain the response of the representation rather than to serve as a perceptual benchmark. A $512\times512$ grayscale Lena image was corrupted with nested salt-and-pepper noise at densities $5,10,\ldots,100\%$. A single random permutation of all pixel locations was generated. At each higher noise density, a longer prefix of this same permutation was corrupted, so the set of damaged pixels strictly contained the set from the preceding level. Salt and pepper values alternated between 1 and 0. Thus every step from level $k-1$ to level $k$ represents a controlled addition of new impulse corruption without changing which pixels had already been corrupted.

Two complementary comparisons were made. First, every noisy image was compared with the original clean image:
\begin{equation}
S_{\mathrm{img}}(X_0,X_k),\qquad S_{\mathrm{HSVD}}(X_0,X_k).
\end{equation}
Second, consecutive noise states were compared directly:
\begin{equation}
S_{\mathrm{img}}(X_{k-1},X_k),\qquad S_{\mathrm{HSVD}}(X_{k-1},X_k).
\end{equation}
The first comparison describes absolute departure from the reference, whereas the second isolates sensitivity to each additional $5\%$ degradation step.

\subsection{KADID-10k data and protocol}
The main experiments used 10,125 distorted images from KADID-10k: 81 reference images, 25 distortion types, and five nominal levels per distortion type \cite{Lin2019KADID}. KADID subjective scores were collected with a degradation-category-rating protocol in which a pristine image and its distorted version were shown together; the distorted image was rated on a five-point impairment scale from 1 (very annoying) to 5 (imperceptible), with 30 ratings collected for each distorted image \cite{Lin2019KADID}. Differential Mean Opinion Score (DMOS) generally denotes an aggregate subjective score derived from differential judgments relative to a reference. The released KADID file is named \texttt{dmos.csv} and uses the field name \texttt{dmos}; on the official database page its range is documented as $[1,5]$, with larger values corresponding to higher visual quality \cite{KADIDDatabase}. Because this orientation is opposite to the lower-is-better convention used by some other DMOS databases, we retain the KADID field name but explicitly use the native higher-is-better orientation throughout this paper. All images were converted to grayscale/luminance before calculating either conventional SSIM or HSVD features.

The KADID analysis was organized into four complementary studies.

\paragraph{Study I: standalone full-reference quality.}
For every distorted image $D$ with reference $R$, we computed $S_{\mathrm{img}}(R,D)$ and $S_{\mathrm{HSVD}}(R,D)$ and measured their Spearman (SRCC), Pearson (PLCC), and Kendall correlations with the subjective score.

\paragraph{Study II: neighboring-level sensitivity.}
Within every reference-image/distortion-type chain, adjacent nominal levels were paired sequentially. The transitions $1\rightarrow2$, $2\rightarrow3$, $3\rightarrow4$, and $4\rightarrow5$ yielded 8,100 pairs in total. Metric change magnitude was represented by
\begin{equation}
C_{\mathrm{img}}=1-SSIM(D_k,D_{k+1}),\qquad
C_{\mathrm{HSVD}}=1-SSIM(E(D_k),E(D_{k+1})),
\end{equation}
where $|\Delta DMOS|=|q_{k+1}-q_k|$ and $q_k$ denotes the KADID subjective score at level $k$. For the 1,000 cluster-bootstrap replicates, 81 reference-image identifiers were sampled with replacement from the 81 unique references; every selected identifier contributed all of its neighboring-level pairs, so the within-reference dependence structure was preserved before the correlation difference was recomputed.

\paragraph{Study III: full-reference complementarity.}
Absolute DMOS was predicted with three ordinary least-squares linear models using (i) conventional SSIM only, (ii) HSVD-SSIM only, and (iii) both scores. Evaluation used 10 repetitions of five-fold cross-validation grouped by the 81 reference images; all 125 distorted variants of a given reference were therefore always assigned entirely to training or entirely to testing. Bootstrap uncertainty was estimated from 2,000 cluster resamples using the same procedure described above: 81 reference identifiers were sampled with replacement and all distorted variants belonging to each sampled reference were retained together.

\paragraph{Study IV: no-reference prediction.}
Absolute DMOS was predicted from one distorted image at a time using Image-only, HSVD-only, and Combined feature sets. Ridge regression used standardized predictors and $\ell_2$ regularization with $\alpha=10$. The nonlinear model was a HistGradientBoosting regressor with learning rate $0.05$, 220 boosting iterations, at most 15 leaf nodes, a minimum of 30 samples per leaf, and $\ell_2$ regularization $1.0$; remaining settings followed the scikit-learn implementation \cite{Pedregosa2011ScikitLearn}. Both model families were evaluated using five repetitions of five-fold grouped cross-validation by reference image. The final out-of-fold absolute-quality predictions were also used, without additional fitting, to estimate signed and absolute quality changes between adjacent distortion levels. Reference-level bootstrap resampling used 1,000 cluster samples, again resampling reference identifiers with replacement while retaining all distorted images from a sampled reference together.

\section{Results}
\subsection{Nested noise: absolute similarity and incremental sensitivity}
Figure~\ref{fig:lena_hsvd_maps} shows the representation itself for the clean image and the first experimental corruption level. The HSVD panels use one common fixed scale from 0 to 1; no per-map min--max normalization is applied. The $5\%$ perturbation already produces widespread local changes in the entropy map, providing a direct visual counterpart to the strong initial HSVD-SSIM response.

\begin{figure}[H]
    \centering
    \includegraphics[width=0.88\textwidth]{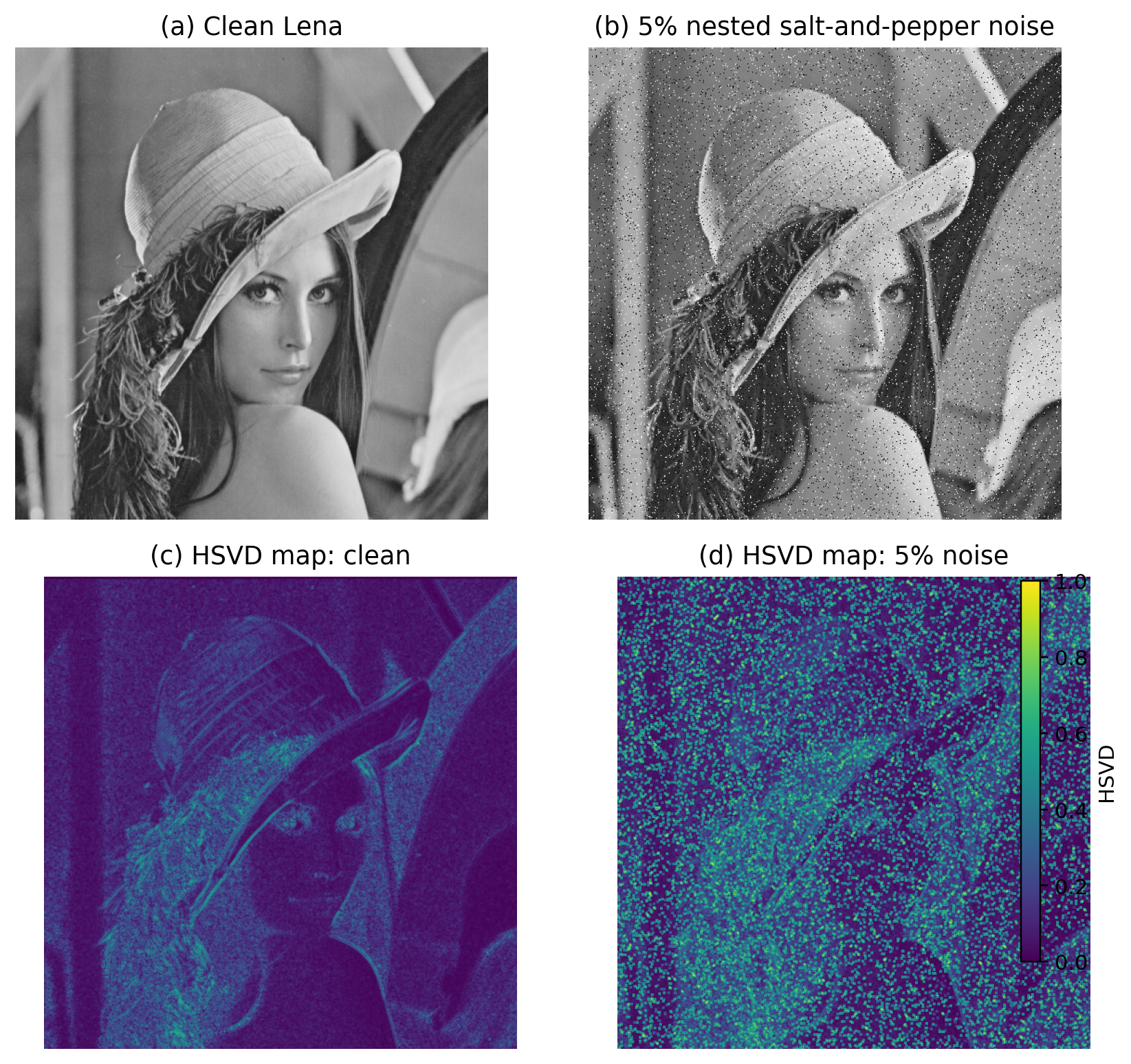}
    \caption{Direct visualization of the HSVD representation for the clean Lena image and the first nested-noise level ($5\%$). The two HSVD maps are shown on the same fixed $[0,1]$ scale and were not independently rescaled for display. The $5\%$ level is used because it is the first level in the experimental sequence.}
    \label{fig:lena_hsvd_maps}
\end{figure}

Figure~\ref{fig:lena_visual} shows four representative states from the same nested salt-and-pepper experiment. The $50\%$ image is strongly degraded but the main scene remains identifiable. At $80\%$, recognizable content is severely reduced, and at $100\%$ every pixel has been replaced by salt or pepper. These images are used only to provide visual context for the metric curves; no human opinion scores were collected for this illustrative experiment.

\begin{figure}[H]
    \centering
    \begin{subfigure}[b]{0.235\textwidth}
        \includegraphics[width=\textwidth]{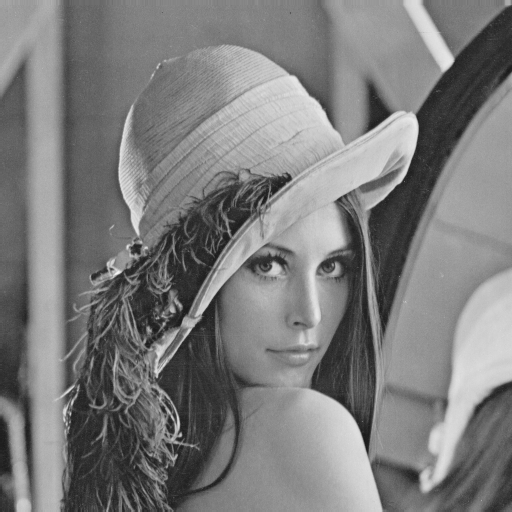}
        \caption{Clean}
    \end{subfigure}\hfill
    \begin{subfigure}[b]{0.235\textwidth}
        \includegraphics[width=\textwidth]{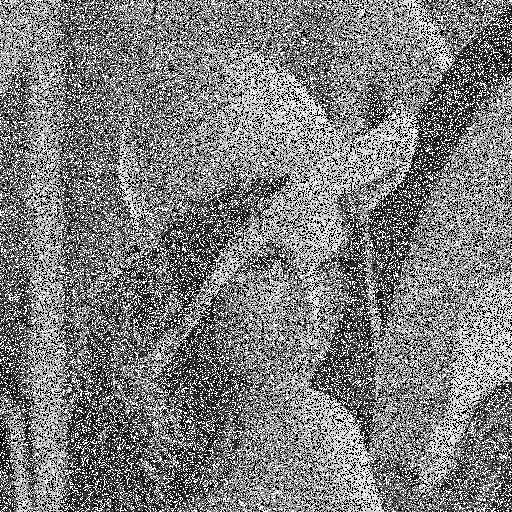}
        \caption{$50\%$ noise}
    \end{subfigure}\hfill
    \begin{subfigure}[b]{0.235\textwidth}
        \includegraphics[width=\textwidth]{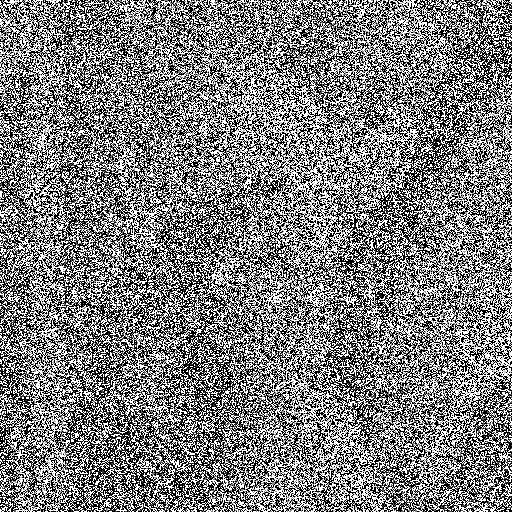}
        \caption{$80\%$ noise}
    \end{subfigure}\hfill
    \begin{subfigure}[b]{0.235\textwidth}
        \includegraphics[width=\textwidth]{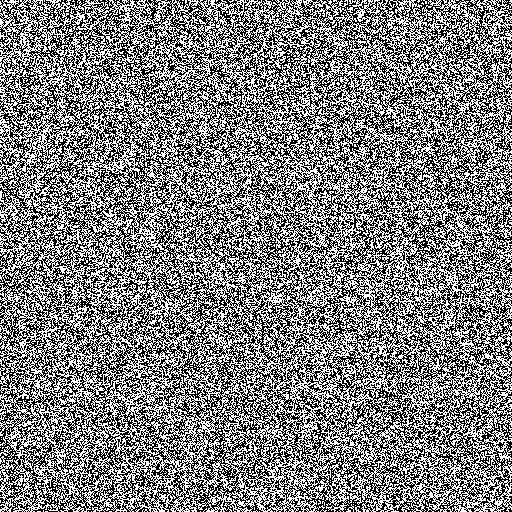}
        \caption{$100\%$ noise}
    \end{subfigure}
    \caption{Representative stages of the nested salt-and-pepper-noise experiment. Every higher noise level contains all corrupted pixels from the preceding level plus newly corrupted pixels.}
    \label{fig:lena_visual}
\end{figure}

The clean-reference comparison in Fig.~\ref{fig:lena_clean_curve} shows that the two representations behave differently from the first noise increment. At $5\%$ corruption, conventional SSIM falls from 1 to $0.334$, while HSVD-SSIM falls more sharply to $0.137$. At $20\%$, the corresponding values are $0.094$ and $0.032$; at $50\%$, they are $0.031$ and $0.013$. Thus HSVD-SSIM is more aggressive in declaring local structural departure from the clean reference. Conventional SSIM stays above HSVD-SSIM over nearly the entire noise range and decreases more gradually.

This behavior is important for interpretation. The clean-reference Lena result does \emph{not} indicate that HSVD-SSIM is a better absolute perceptual similarity measure. In fact, visually recognizable large-scale structure persists at noise levels for which both clean-reference scores are already close to zero. Rather, the steeper HSVD response indicates high sensitivity to the widespread local changes introduced by impulse corruption. This distinction motivated the separate neighboring-state analysis.

\begin{figure}[H]
    \centering
    \includegraphics[width=0.80\textwidth]{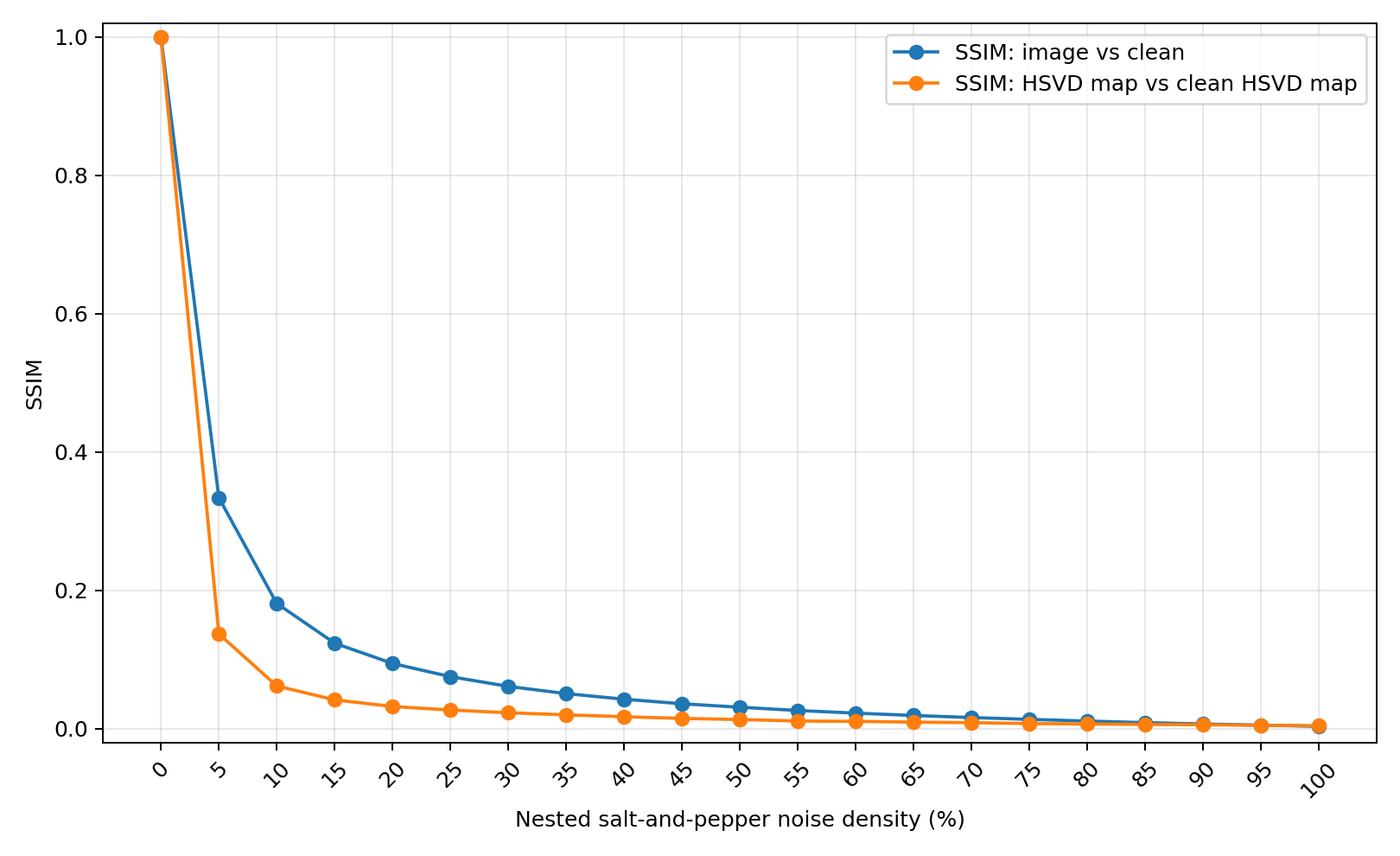}
    \caption{Similarity to the clean Lena reference as nested salt-and-pepper density increases. Conventional image-domain SSIM decreases more gradually, whereas HSVD-SSIM reacts more strongly to early local corruption.}
    \label{fig:lena_clean_curve}
\end{figure}

When consecutive noisy states are compared, the interpretation reverses in a useful way. Figure~\ref{fig:lena_neighbor_curve} shows that image-domain SSIM rapidly approaches one as the overall noise density becomes large: the 75--80\% pair has $SSIM=0.964$, and the 95--100\% pair reaches $0.970$. In other words, once both images are already heavily corrupted, the addition of a further $5\%$ corrupted pixels produces only a small image-domain SSIM change. HSVD-SSIM remains substantially lower: $0.873$ for 75--80\% and $0.902$ for 95--100\%. Across the five high-noise transitions ending at 80, 85, 90, 95, and 100\%, mean image-domain dissimilarity $1-SSIM$ is $0.0330$, whereas mean HSVD-domain dissimilarity is $0.1131$, approximately $3.42$ times larger.

Therefore the same property that makes HSVD-SSIM appear too sensitive in the clean-reference view becomes advantageous when the task is to detect an additional local structural change between two already degraded states. This is the physical motivation for the KADID neighboring-level and complementarity experiments.

\begin{figure}[H]
    \centering
    \includegraphics[width=0.80\textwidth]{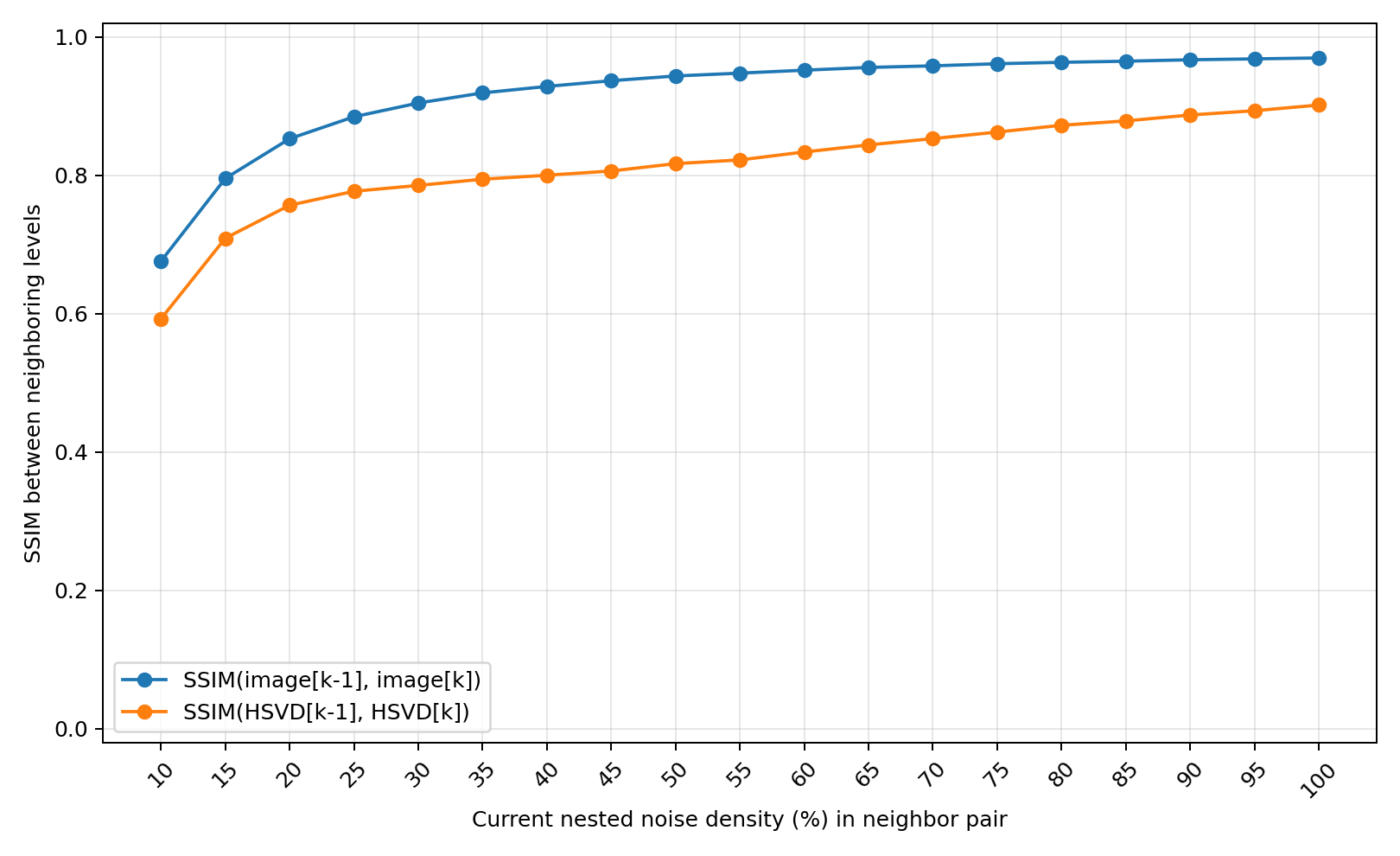}
    \caption{SSIM between consecutive nested-noise states. Image-domain SSIM approaches saturation near one in the high-noise regime, while HSVD-SSIM retains a substantially larger response to each additional $5\%$ corruption step.}
    \label{fig:lena_neighbor_curve}
\end{figure}

\begin{table}[H]
\centering
\caption{Selected values from the Lena experiment. ``Clean'' denotes similarity to the original image; ``neighbor'' denotes similarity to the immediately preceding noise level.}
\label{tab:lena_selected}
\small
\begin{tabular}{ccccc}
\toprule
Noise (\%) & SSIM-clean & HSVD-clean & SSIM-neighbor & HSVD-neighbor \\
\midrule
5   & 0.3341 & 0.1372 & 0.3341 & 0.1372 \\
20  & 0.0945 & 0.0323 & 0.8532 & 0.7570 \\
50  & 0.0312 & 0.0134 & 0.9439 & 0.8172 \\
80  & 0.0112 & 0.0070 & 0.9637 & 0.8726 \\
100 & 0.0037 & 0.0044 & 0.9700 & 0.9019 \\
\bottomrule
\end{tabular}
\end{table}

\subsection{KADID-10k Study I: HSVD-SSIM is not a standalone replacement for SSIM}
The first large-scale result provides an essential negative control. Across all 10,125 distorted KADID images, conventional SSIM correlated substantially better with subjective quality than HSVD-SSIM. The global SRCC values were $0.619$ and $0.450$, respectively; PLCC values were $0.576$ and $0.438$, and Kendall correlations were $0.447$ and $0.315$. Conventional SSIM also achieved the higher distortion-specific SRCC for 23 of the 25 distortion types.

\begin{table}[H]
\centering
\caption{Standalone full-reference correlation with KADID-10k subjective quality.}
\label{tab:kadid_standalone}
\begin{tabular}{lccc}
\toprule
Metric & SRCC & PLCC & Kendall \\
\midrule
Image-domain SSIM & \textbf{0.6189} & \textbf{0.5759} & \textbf{0.4470} \\
HSVD-SSIM & 0.4504 & 0.4377 & 0.3153 \\
\bottomrule
\end{tabular}
\end{table}

This result establishes the correct role of the proposed representation: HSVD-SSIM should not be marketed as a universally superior FR IQA metric. The question is instead whether the information it captures is complementary to information already represented by conventional SSIM.

\subsection{KADID-10k Study II: neighboring degradation levels}
For the 8,100 adjacent-level pairs, both metric-change measures had relatively low correlation with $|\Delta DMOS|$, confirming that the perceptual magnitude of a transition cannot be explained by one pairwise similarity value alone. Nevertheless, HSVD change was consistently more associated with human change than image-domain SSIM change. SRCC increased from $0.1115$ to $0.1338$, Kendall correlation from $0.0720$ to $0.0866$, and PLCC from approximately zero ($-0.0041$) to $0.0683$.

The SRCC difference of $+0.0223$ was stable under 1,000 reference-cluster bootstrap resamples: the 95\% confidence interval was $[+0.0134,+0.0304]$, and HSVD had the higher bootstrap SRCC in all resamples. Distortion-specific SRCC was higher for HSVD in 17 of the 25 distortion types. These absolute correlation values remain modest, so this experiment is interpreted as evidence of incremental sensitivity rather than as strong standalone perceptual prediction.

\begin{table}[H]
\centering
\caption{KADID neighboring-level correlation with the magnitude of subjective change $|\Delta DMOS|$. Metric change is $1-SSIM$ between adjacent distorted states.}
\label{tab:kadid_incremental}
\begin{tabular}{lccc}
\toprule
Metric change & SRCC & PLCC & Kendall \\
\midrule
Image-domain $1-SSIM$ & 0.1115 & -0.0041 & 0.0720 \\
HSVD-domain $1-SSIM$ & \textbf{0.1338} & \textbf{0.0683} & \textbf{0.0866} \\
\bottomrule
\end{tabular}
\end{table}

\subsection{Illustrative KADID cases with nearly matched conventional SSIM}
To make the complementarity effect visually interpretable, Fig.~\ref{fig:kadid_cases} presents two post-hoc illustrative examples selected from the same Experiment-I table. These examples are not used as statistical evidence; the aggregate benchmark results remain the primary test. Each row contains one reference image and two different distorted versions whose conventional SSIM values are almost identical, while both the subjective score and HSVD-SSIM differ substantially. All panels are shown in the same grayscale representation used in the quantitative analysis.

In the first example, the two distorted versions of reference \texttt{I22.png} have conventional SSIM values $0.934090$ and $0.934098$, a difference below $10^{-5}$. Their subjective scores are $3.13$ and $4.30$, whereas HSVD-SSIM changes from $0.6695$ to $0.9036$, correctly ordering the higher-subjective-quality state above the lower-quality state. In the second example, two versions of \texttt{I18.png} have SSIM values $0.933774$ and $0.933802$ but subjective scores $1.43$ and $3.74$; HSVD-SSIM changes from $0.5785$ to $0.8411$, again agreeing with the subjective ordering.

\begin{figure}[H]
\centering
\begin{subfigure}[b]{0.31\textwidth}
\includegraphics[width=\textwidth]{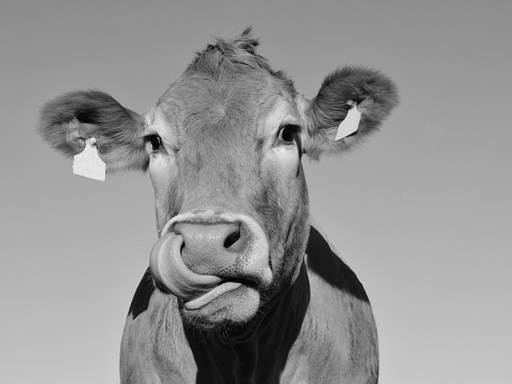}
\caption{I22 reference}
\end{subfigure}\hfill
\begin{subfigure}[b]{0.31\textwidth}
\includegraphics[width=\textwidth]{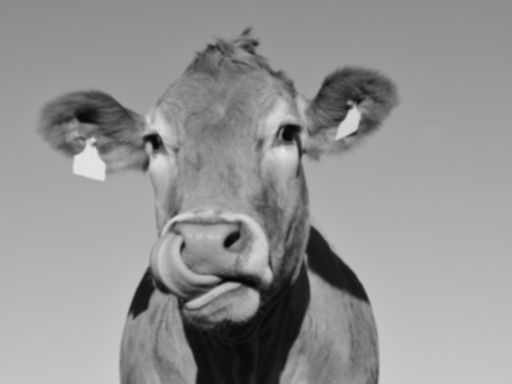}
\caption{I22 A: DMOS 3.13}
\end{subfigure}\hfill
\begin{subfigure}[b]{0.31\textwidth}
\includegraphics[width=\textwidth]{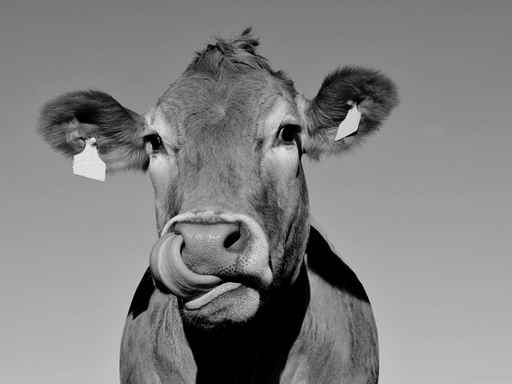}
\caption{I22 B: DMOS 4.30}
\end{subfigure}

\vspace{0.5em}

\begin{subfigure}[b]{0.31\textwidth}
\includegraphics[width=\textwidth]{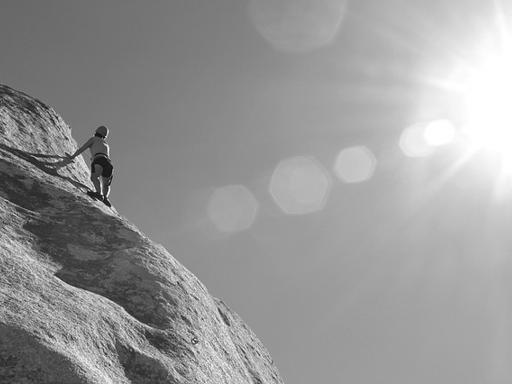}
\caption{I18 reference}
\end{subfigure}\hfill
\begin{subfigure}[b]{0.31\textwidth}
\includegraphics[width=\textwidth]{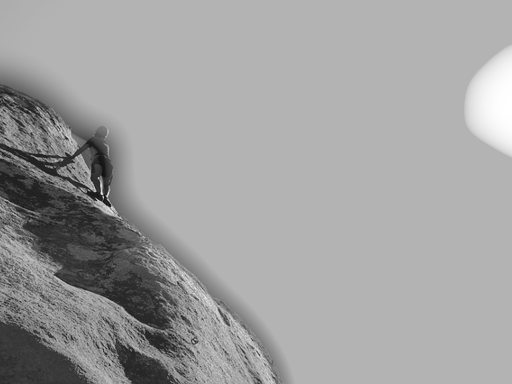}
\caption{I18 A: DMOS 1.43}
\end{subfigure}\hfill
\begin{subfigure}[b]{0.31\textwidth}
\includegraphics[width=\textwidth]{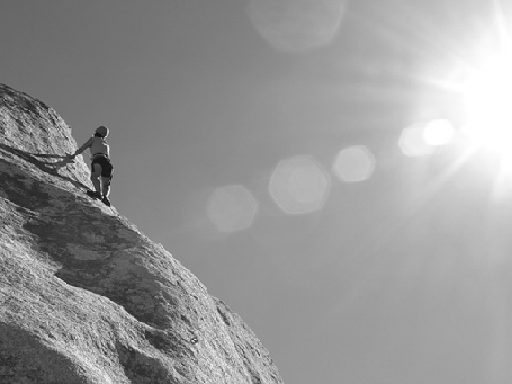}
\caption{I18 B: DMOS 3.74}
\end{subfigure}
\caption{Illustrative KADID-10k examples with nearly matched conventional SSIM but substantially different subjective quality. Top row: $SSIM_A=0.934090$, $SSIM_B=0.934098$, while $HSVD_A=0.6695$ and $HSVD_B=0.9036$. Bottom row: $SSIM_A=0.933774$, $SSIM_B=0.933802$, while $HSVD_A=0.5785$ and $HSVD_B=0.8411$. These examples illustrate complementarity and were selected post hoc; statistical conclusions are based on the full benchmark.}
\label{fig:kadid_cases}
\end{figure}

\subsection{KADID-10k Study III: full-reference complementarity}
The key FR experiment tested whether HSVD-SSIM improves prediction when conventional SSIM is already available. Table~\ref{tab:fr_complementarity} reports held-out predictions averaged across repeated grouped folds. SSIM-only reproduced the expected SRCC of approximately $0.618$. HSVD-only remained weaker ($0.449$). However, the joint model reached SRCC $0.659$, an absolute gain of $0.0414$ over SSIM-only. The combined model also improved Kendall correlation, PLCC, RMSE, MAE, and $R^2$, although the improvements in linear-error measures were smaller than the SRCC improvement.

\begin{table}[H]
\centering
\caption{Full-reference DMOS prediction under 10$\times$5-fold grouped cross-validation by reference image.}
\label{tab:fr_complementarity}
\small
\begin{tabular}{lcccccc}
\toprule
Model & SRCC & PLCC & Kendall & RMSE & MAE & $R^2$ \\
\midrule
SSIM only & 0.6175 & 0.5752 & 0.4455 & 0.8856 & 0.7601 & 0.3308 \\
HSVD only & 0.4492 & 0.4369 & 0.3142 & 0.9738 & 0.8394 & 0.1909 \\
SSIM + HSVD & \textbf{0.6590} & \textbf{0.5796} & \textbf{0.4744} & \textbf{0.8823} & \textbf{0.7582} & \textbf{0.3359} \\
\bottomrule
\end{tabular}
\end{table}

Reference-level bootstrap analysis confirmed that the ranking improvement was highly stable: the 95\% interval for $\Delta SRCC=SRCC_{\mathrm{SSIM+HSVD}}-SRCC_{\mathrm{SSIM}}$ was $[+0.0363,+0.0462]$, and every one of 2,000 bootstrap samples had a positive SRCC difference. The two FR scores are strongly related, so HSVD is not an independent quality axis; instead it acts as a correction channel containing residual information not represented by SSIM alone. The full-data standardized linear coefficients were approximately $0.687$ for SSIM and $-0.133$ for HSVD. The negative conditional coefficient should not be interpreted as saying that lower HSVD similarity is intrinsically better; it arises only after controlling for the strongly correlated SSIM predictor.

\begin{figure}[H]
\centering
\includegraphics[width=0.78\textwidth]{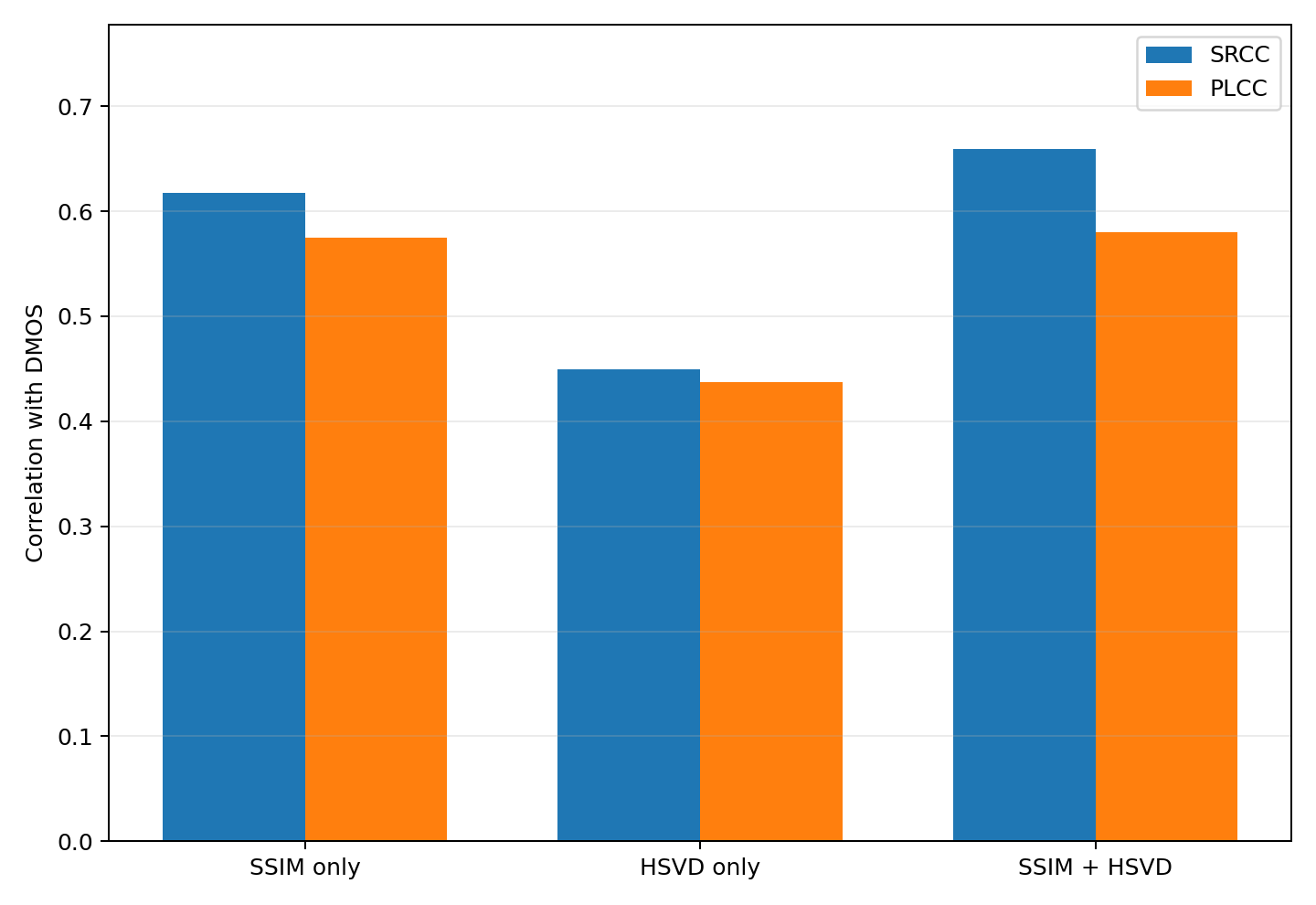}
\caption{Repeated grouped cross-validation for full-reference DMOS prediction. Adding HSVD-SSIM to conventional SSIM consistently improves rank correlation on held-out reference images.}
\label{fig:fr_cv}
\end{figure}

\subsection{KADID-10k Study IV: no-reference prediction}
The NR study asks a broader question: does an HSVD map contain useful information even when the pristine image is not available at all? Table~\ref{tab:nr_absolute} reports the out-of-fold performance of both model families. For HGB, Image-only features achieved SRCC $0.5277$, HSVD-only features achieved $0.5157$, and the Combined representation reached $0.5748$. The combined model simultaneously improved PLCC from $0.5643$ to $0.6067$, reduced RMSE from $0.9182$ to $0.8826$, and increased $R^2$ from $0.2807$ to $0.3353$. The same pattern was observed with Ridge regression: SRCC increased from $0.4157$ for Image-only to $0.4884$ for Combined features.

\begin{table}[H]
\centering
\caption{No-reference absolute DMOS prediction under 5$\times$5-fold grouped cross-validation by reference image. No pristine image, distortion identity, distortion type, or level is used as a predictive feature.}
\label{tab:nr_absolute}
\small
\begin{tabular}{llcccccc}
\toprule
Regressor & Features & SRCC & PLCC & Kendall & RMSE & MAE & $R^2$ \\
\midrule
HGB & Image & 0.5277 & 0.5643 & 0.3727 & 0.9182 & 0.7483 & 0.2807 \\
HGB & HSVD & 0.5157 & 0.5593 & 0.3627 & 0.9019 & 0.7397 & 0.3059 \\
HGB & Combined & \textbf{0.5748} & \textbf{0.6067} & \textbf{0.4109} & \textbf{0.8826} & \textbf{0.7075} & \textbf{0.3353} \\
\midrule
Ridge & Image & 0.4157 & 0.4220 & 0.2858 & 0.9859 & 0.8368 & 0.1708 \\
Ridge & HSVD & 0.4116 & 0.4554 & 0.2856 & 0.9640 & 0.8180 & 0.2071 \\
Ridge & Combined & \textbf{0.4884} & \textbf{0.5181} & \textbf{0.3409} & \textbf{0.9301} & \textbf{0.7748} & \textbf{0.2619} \\
\bottomrule
\end{tabular}
\end{table}

The HGB SRCC gain of approximately $+0.0471$ had a reference-bootstrap 95\% interval $[+0.0332,+0.0611]$; all 1,000 bootstrap differences were positive. For Ridge, the corresponding mean gain was approximately $+0.0721$ with 95\% interval $[+0.0515,+0.0915]$. For both model families, Combined SRCC exceeded Image-only SRCC in 17 of the 25 distortion types. This consistency across linear and nonlinear regressors supports the conclusion that HSVD statistics add complementary NR information rather than merely exploiting one particular estimator.

\begin{figure}[H]
\centering
\begin{subfigure}[b]{0.48\textwidth}
\includegraphics[width=\textwidth]{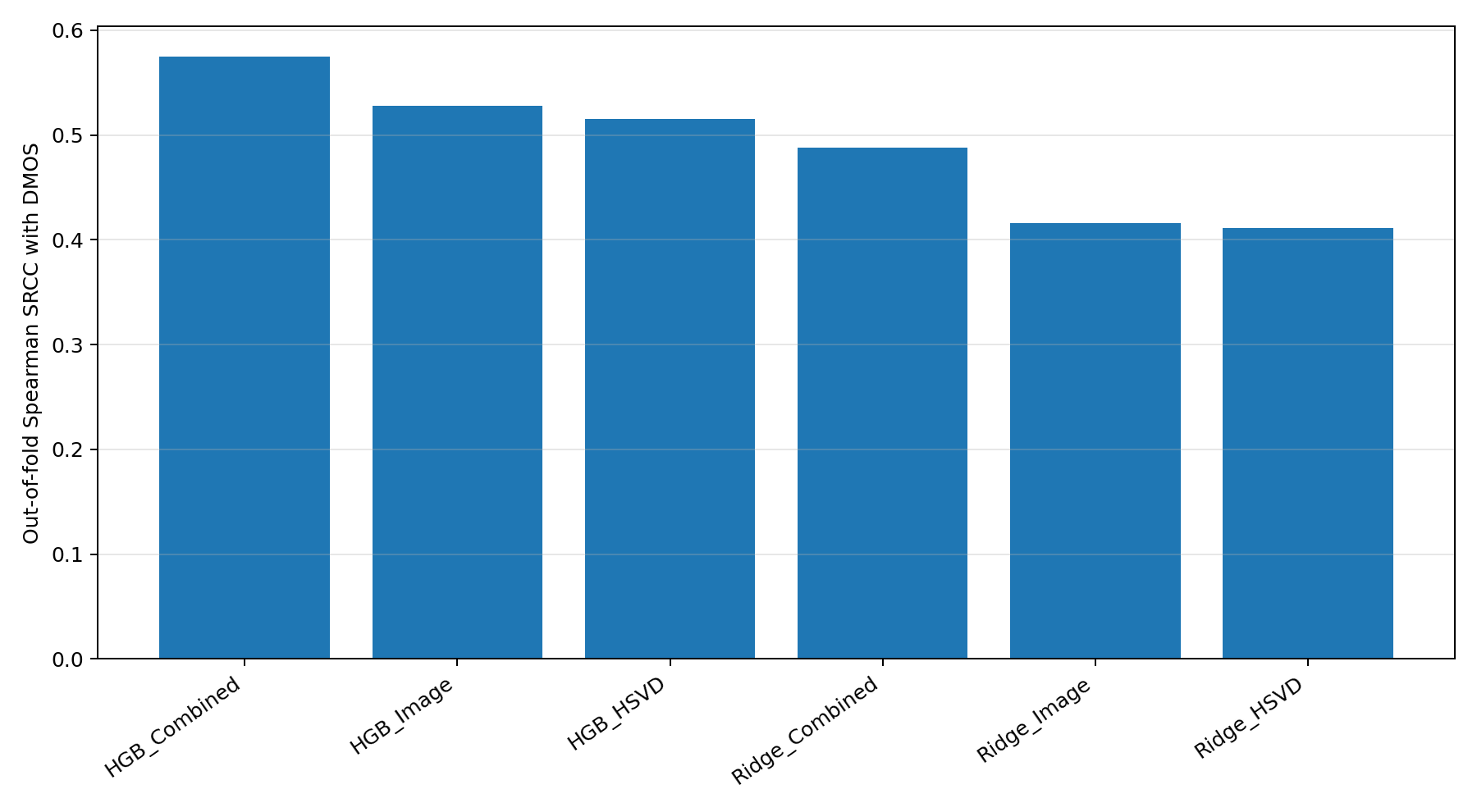}
\caption{Absolute DMOS SRCC}
\end{subfigure}\hfill
\begin{subfigure}[b]{0.48\textwidth}
\includegraphics[width=\textwidth]{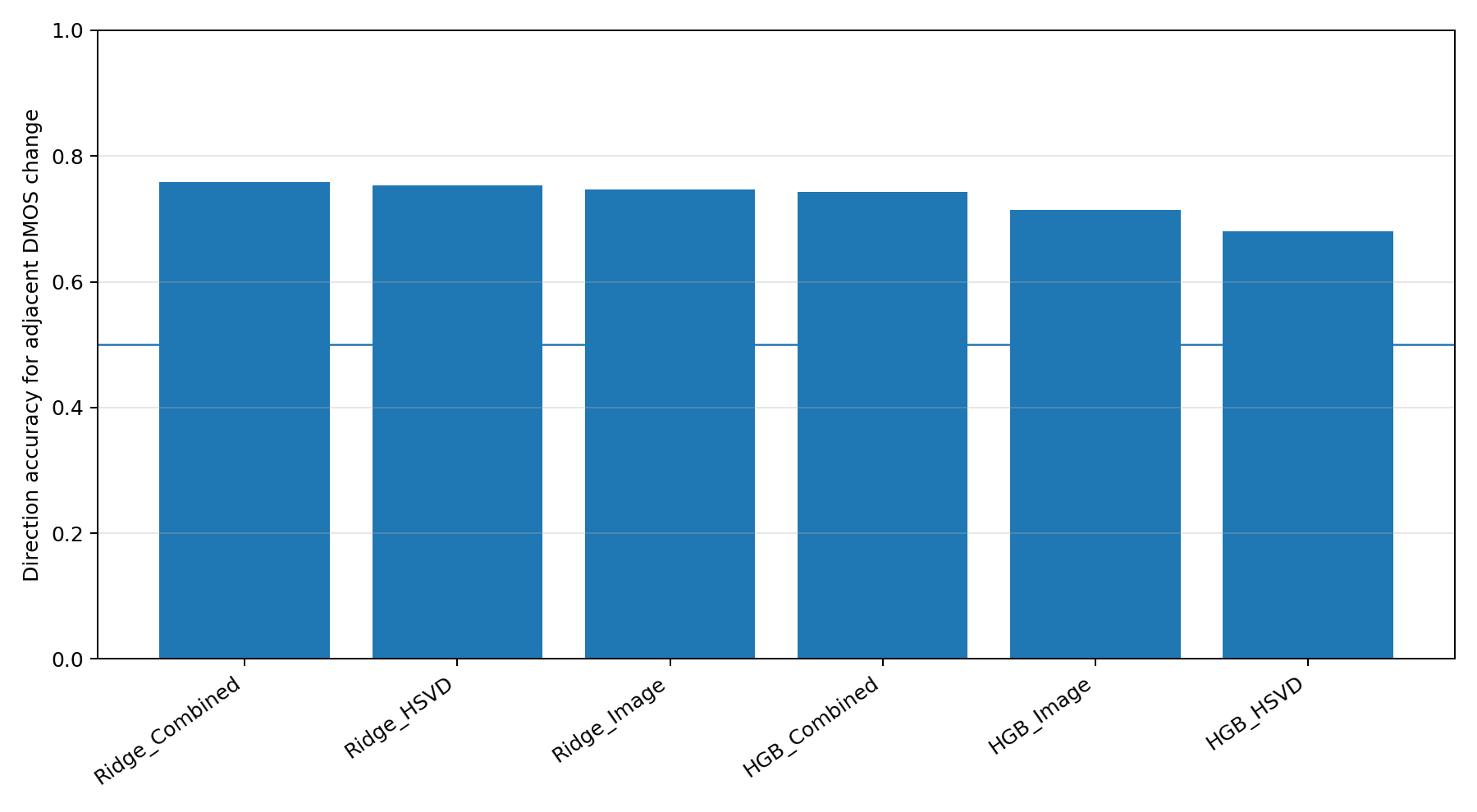}
\caption{Neighbor-pair ordering accuracy}
\end{subfigure}
\caption{No-reference performance. Left: absolute-quality rank correlation for Image-only, HSVD-only, and Combined feature sets. Right: accuracy of the quality ordering between adjacent states derived from independently predicted out-of-fold DMOS values.}
\label{fig:nr_summary}
\end{figure}

The out-of-fold absolute predictions were subsequently differenced to estimate quality change without any new fitting. For HGB, the Combined model achieved signed $\Delta DMOS$ SRCC $0.3597$, compared with $0.2664$ for Image-only. Correlation with $|\Delta DMOS|$ increased from $0.2533$ to $0.3385$, and pairwise direction accuracy increased from $0.7138$ to $0.7433$. For stronger subjective changes, $|\Delta DMOS|\ge0.25$, direction accuracy reached $0.7924$. These values are not high enough to describe the model as a complete NR-IQA solution, but they demonstrate that the additional HSVD information survives the more difficult setting in which no pristine reference is available.

\begin{figure}[H]
\centering
\includegraphics[width=0.72\textwidth]{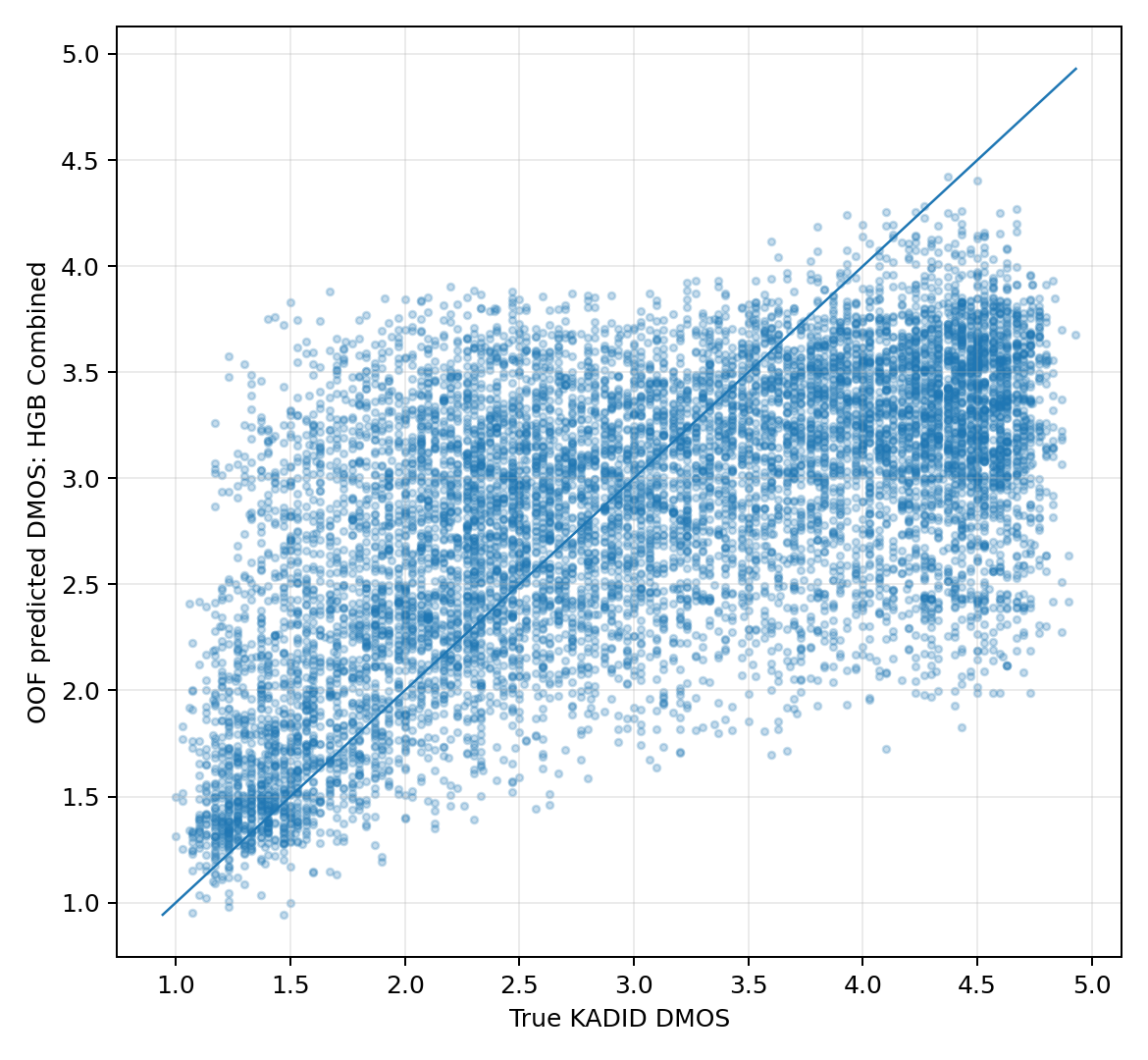}
\caption{True versus out-of-fold predicted DMOS for the HGB Combined no-reference model. Predictions exhibit noticeable regression toward the center of the quality range, highlighting the remaining difficulty of absolute no-reference quality estimation.}
\label{fig:nr_scatter}
\end{figure}

\section{Discussion}
\subsection{Sensitivity to change is not the same as absolute perceptual quality}
The experiments show that image-domain SSIM and HSVD-SSIM emphasize different aspects of degradation. This distinction is already visible in the controlled Lena experiment. When each state is compared with the clean image, HSVD-SSIM falls very rapidly, indicating that the local singular-value distribution is strongly perturbed even while recognizable large-scale image content remains. Conventional SSIM decreases more gradually and therefore behaves more naturally as a coarse clean-reference similarity trajectory in this example. When two already degraded neighboring states are compared, however, image-domain SSIM approaches one whereas HSVD-SSIM retains a substantially larger response to the newly corrupted pixels. In other words, a descriptor can be highly sensitive to an incremental structural change without being the best predictor of absolute human-rated quality.

This distinction has a direct analogue in subjective IQA methodology. Pairwise and boosted triplet comparisons are used precisely because fine quality differences may be difficult to resolve on an absolute rating scale, and restoration-oriented databases emphasize that perceptually relevant artifacts need not follow simple pointwise distortion models \cite{Men2021Pairwise,Gu2020PIPAL}. Our KADID neighboring-level experiment is therefore best interpreted as a sensitivity test rather than as an alternative absolute IQA benchmark. The low absolute correlations between either neighboring-state dissimilarity and $|\Delta DMOS|$ reinforce this conservative interpretation: HSVD retains more response to incremental degradation, but pairwise similarity alone is not sufficient to predict the magnitude of subjective quality change.

\subsection{What local SVD entropy measures}
The behavior of HSVD can be understood from the singular-value spectrum of a patch. If most matrix energy is concentrated in one singular direction, the normalized distribution approaches $p_1\approx1$ and $p_{k>1}\approx0$, so $H_{\mathrm{SVD}}$ approaches zero. If the singular energy is distributed more evenly across several directions, entropy increases. This connects HSVD to a broader SVD literature in which dominant singular components are associated with organized image structure and weaker components with fine-scale detail or noise, and in which singular-value entropy has been used as a compact measure of structural complexity \cite{Guo2016SVDDenoising,Buisine2021SVDEntropy}. Importantly, this interpretation does not imply that every degradation must monotonically increase local entropy; blur, structured artifacts, quantization, and noise can alter the spectrum in different ways. What the descriptor records is a change in the \emph{shape} of the local singular-value distribution.

The normalization of singular values makes this distinction especially clear. Multiplying a nonzero patch by a scalar leaves $H_{\mathrm{SVD}}$ unchanged, whereas conventional SSIM explicitly contains luminance and contrast terms. HSVD therefore suppresses one class of amplitude information while retaining information about the relative organization of local matrix components. Such invariance is advantageous when the desired cue is structural complexity, but it also explains why HSVD alone cannot represent all perceptually relevant aspects of quality. Its strongest role is consequently as an additional channel rather than a complete quality coordinate.

\subsection{Full-reference complementarity rather than replacement}
The KADID full-reference results support this interpretation quantitatively. HSVD-SSIM alone is substantially weaker than conventional SSIM, but the combined model improves ranking from SRCC $0.618$ to $0.659$. The gain in PLCC and RMSE is much smaller than the gain in SRCC. This pattern suggests that HSVD is particularly useful for correcting local ordering errors---cases in which conventional SSIM assigns very similar scores to images that human observers rank differently. The matched-SSIM examples in Fig.~\ref{fig:kadid_cases} provide a visual illustration of this effect, while the repeated grouped cross-validation and reference-level bootstrap show that the aggregate improvement is not dependent on a small number of selected examples.

This result fits the broader IQA literature more naturally as a complementarity finding than as a new standalone-metric claim. FSIM combines phase congruency and gradient magnitude because the two cues respond to different perceptual properties, while DISTS explicitly combines structure and texture statistics in a learned feature space \cite{Zhang2011FSIM,Ding2020DISTS}. Multi-domain and optimization-based IQA methods similarly exploit partially non-redundant information from distinct representations. In the present linear model, the negative conditional HSVD coefficient should therefore be viewed as a suppression/correction effect in the presence of a strongly correlated SSIM predictor, not as evidence that lower HSVD similarity is intrinsically associated with better quality. The statistically stable improvement is the relevant finding: HSVD contains residual ordering information after the conventional structural score is already known.

\subsection{Distortion dependence and the limits of a single structural cue}
Complementarity is not uniform across distortion classes. HSVD produced the stronger neighboring-level correlation in 17 of the 25 KADID distortion types, and the combined no-reference models likewise improved over Image-only features in 17 of 25 types. This heterogeneity is expected. A local singular-value distribution is directly sensitive to changes in local rank, orientation mixture, texture, edge organization, and fine-scale irregularity, whereas other distortions act more strongly through luminance, color, global geometry, or semantic appearance. A single scalar local-complexity mechanism should therefore not be expected to respond optimally to every distortion family.

This observation also helps reconcile the apparently contradictory standalone and fusion results. If HSVD encoded essentially the same information as SSIM, adding it would not provide a reproducible gain. If it encoded a universally superior quality variable, it would outperform SSIM alone. The observed intermediate case---weaker standalone performance but consistent fused improvement---is precisely what is expected from a descriptor that is informative for a subset of perceptual failure modes and partly redundant for others. A more detailed distortion-wise analysis, particularly for blur, impulse/noise, pixelation, texture-related, and chromatic distortions, is therefore an important next step.

\subsection{From traversal-designed entropy maps to direct 2-D matrix entropy}
The direct formulation also clarifies the methodological relationship to our earlier two-dimensional entropy work. In \cite{Velichko2022EntropyApprox}, local circular kernels were converted to specially ordered one-dimensional sequences before SVD, sample, permutation, or neural-network entropy was evaluated. The circular traversal was designed to reduce orientation sensitivity of the resulting two-dimensional entropy distributions. That approach was useful because it allowed one-dimensional entropy algorithms to be transferred to images, but it necessarily introduced a sequence-construction rule and, for SVD entropy, a subsequent embedding stage.

The present method removes both steps. A $3\times3$ patch is treated directly as a matrix, so the descriptor depends only on its singular values. Right-angle rotation and reflection invariance then follows exactly from SVD rather than from a traversal designed to make the resulting sequence approximately stable. The direct formulation is also algorithmically simpler because no ordered series or delay-embedding matrix must be constructed. We deliberately do not claim a hardware-independent speedup over the earlier implementation because the two studies were executed under different computational conditions; a controlled benchmark of circular-kernel SvdEn2D against direct-patch HSVD would be the appropriate way to quantify that advantage.

The invariance should also be interpreted locally. Rotating or reflecting a patch does not change its HSVD value, but rotating only one of two complete images destroys pixel correspondence and therefore does not make HSVD-SSIM globally registration invariant. Moreover, orientation invariance necessarily discards some orientation-specific information. That loss is another reason why the descriptor is most naturally used together with image-domain or other directional features.

\subsection{Implications for no-reference IQA}
The no-reference experiment provides an independent test of the same hypothesis. Once the pristine image is removed, HSVD-SSIM itself is unavailable; only statistics of the observed HSVD map can be used. Nevertheless, adding these statistics improves both Ridge and HGB models. For HGB, SRCC increases from $0.528$ to $0.575$, PLCC from $0.564$ to $0.607$, and RMSE decreases from $0.918$ to $0.883$. Thus, unlike the full-reference fusion experiment where the most pronounced gain is in ranking, the no-reference result also shows a clearer improvement in numerical prediction error.

This result is consistent with interpretable NR-IQA research that combines natural-scene statistics, texture, gradient, entropy, transform-domain, and singular-value-related features because these families describe different departures from natural image structure \cite{Varga2020NRFusion,Cui2020DualDomainNR,Varga2021GlobalStats,Varga2023GlobalLocal}. Thus, the NR experiment supports the narrower claim that HSVD carries quality-related information without a reference; it is not intended as a state-of-the-art blind-IQA benchmark against large contemporary deep networks.

The independently predicted no-reference scores also provide a practical direction for before/after monitoring. Given two processed states $A$ and $B$, a single-image model can produce $\widehat Q(A)$ and $\widehat Q(B)$ and use their difference to estimate whether quality is likely to have improved or deteriorated. The present KADID result is only a surrogate validation of this idea because adjacent distortion levels are not actual restoration trajectories, but it establishes the feasibility of testing such a workflow without access to a pristine image.

\subsection{Potential applications}
The most immediate potential application is quality monitoring during image restoration, denoising, or enhancement. In a full-reference setting, HSVD could serve as an additional structural channel when selecting algorithms or parameters that receive nearly indistinguishable conventional SSIM scores. In a no-reference setting, HSVD statistics could contribute to a learned quality estimator used to compare states before and after processing. The latter is particularly relevant to real-world pipelines where the original pristine scene is unavailable.

A second application is iterative stopping or convergence monitoring. SVD entropy has previously been used as a stopping indicator for computer-generated image rendering \cite{Buisine2021SVDEntropy}. The larger neighboring-state response observed here suggests that local HSVD maps may be useful for detecting when iterative restoration is still materially changing local structure and when additional iterations yield diminishing changes. This hypothesis requires direct validation because structural change is not synonymous with perceptual improvement.

Other candidate domains include remote-sensing preprocessing and change-quality monitoring, where spatial entropy maps have already been used as irregularity descriptors \cite{Velichko2022EntropyApprox}; scientific and medical imaging, where interpretable local structure measures can complement application-specific fidelity metrics; and compression or transcoding quality control. Color-aware extensions may also exploit quaternion-SVD or luminance/chrominance decompositions, since SVD-derived color IQA descriptors have already been investigated \cite{Sang2020QuaternionSVD}. These applications should be regarded as directions for validation rather than as demonstrated use cases of the present study.

\begin{table}[H]
\centering
\caption{Interpretation of the main empirical observations. The third column states conclusions that should \emph{not} be inferred from the corresponding result.}
\label{tab:interpretation}
\small
\begin{tabular}{p{0.29\textwidth}p{0.34\textwidth}p{0.29\textwidth}}
\toprule
Observation & Supported interpretation & Does not imply \\
\midrule
HSVD falls faster than SSIM relative to clean Lena & Strong response to widespread local structural perturbation & Better absolute perceptual-quality prediction \\
HSVD retains larger response between heavily degraded neighbors & Greater incremental structural sensitivity in that regime & Accurate prediction of $|\Delta DMOS|$ by similarity alone \\
Standalone HSVD $<$ standalone SSIM on KADID & Local spectral complexity is insufficient as a universal FR quality variable & HSVD contains no useful quality information \\
SSIM+HSVD $>$ SSIM in grouped CV & HSVD contains residual information complementary to SSIM & The two scores are statistically independent quality axes \\
Image+HSVD $>$ Image-only in NR prediction & HSVD contains reference-free quality-related information & State-of-the-art blind IQA performance \\
\bottomrule
\end{tabular}
\end{table}
\section{Limitations and future work}
The present study is intentionally focused on the information value of one simple local representation, and several limitations define the next stage of the work. First, all reported HSVD maps use a fixed $3\times3$ window. Window size determines the balance between locality and structural averaging: smaller neighborhoods emphasize sharp local transitions, whereas larger neighborhoods integrate broader texture and shape. Our earlier circular-kernel entropy study showed analogous scale-dependent changes in the sharpness of two-dimensional entropy distributions \cite{Velichko2022EntropyApprox}. A systematic multiscale extension, for example combining $E_3(X)$, $E_5(X)$, and $E_7(X)$, is therefore a high-priority ablation rather than a cosmetic parameter search.

Second, the current KADID analysis is grayscale. This choice ensures that conventional SSIM and HSVD operate on exactly the same luminance representation, but it necessarily reduces sensitivity to distortions that are primarily chromatic. Future work should evaluate luminance/chrominance HSVD channels, perceptually motivated color spaces, or quaternion-SVD formulations. Existing quaternion-SVD IQA work provides a useful reference point for testing whether inter-channel singular structure adds information beyond the grayscale descriptor \cite{Sang2020QuaternionSVD}.

Third, KADID-10k contains controlled artificial distortions. Its size and subjective annotations make it appropriate for the present hypothesis tests, but generalization to authentic mixed distortions has not yet been established. Cross-database evaluation on natural in-the-wild IQA data, together with restoration-oriented datasets such as PIPAL \cite{Gu2020PIPAL}, is needed before broad claims about perceptual generalization can be made. The same issue applies to the neighboring-level experiment: KADID severity levels are a useful surrogate for successive degradation states, not actual outputs from a denoising or restoration trajectory.

Fourth, the present no-reference models are deliberately modest. Ridge and HGB were used to test whether HSVD adds information to interpretable image statistics, not to compete with the strongest contemporary deep BIQA systems. A more complete future comparison should include established blind-IQA baselines and modern learned representations while preserving reference-grouped or cross-database validation. Such experiments would determine whether HSVD remains useful when stronger feature families are already present, rather than only when combined with simple statistics.

Fifth, dense SVD-map construction is computationally more expensive than conventional SSIM. In the current unoptimized Python implementation, HSVD construction required approximately $0.33$~s per distorted KADID image. The $3\times3$ case is small enough to permit substantial optimization through vectorization, batched linear algebra, GPU execution, lookup/approximation strategies, or analytic treatment of very small matrices. The direct method is algorithmically simpler than the earlier circular-sequence-plus-embedding formulation, but a controlled timing study on identical hardware is still required before claiming a quantitative speed advantage.

Sixth, the invariance established in Section~2 is local rather than global. HSVD is exactly invariant to right-angle rotation and reflection of an individual patch and to uniform multiplicative scaling of a nonzero patch, but HSVD-SSIM still assumes spatial registration of the two images. Orientation invariance can also suppress potentially useful directional information. Future variants could therefore combine HSVD with gradient orientation or anisotropic descriptors instead of treating invariance as universally desirable.

Seventh, the distortion-wise results are heterogeneous. The aggregate gains are statistically stable, yet not every KADID distortion type benefits from HSVD. This should be investigated through class-wise ablation and feature attribution, especially for blur, noise, pixelation, texture-related, geometric, and chromatic distortions. Such analysis may support an adaptive fusion strategy in which HSVD receives greater weight only when the observed degradation is compatible with the local-complexity cue.

Finally, the two KADID case studies in Fig.~\ref{fig:kadid_cases} were selected after the aggregate analysis to make the complementarity phenomenon visually interpretable. They are illustrations rather than independent validation. The statistical conclusions rely on the complete 10,125-image dataset, grouped cross-validation, and reference-level bootstrap analysis. A particularly important future experiment is therefore prospective: generate real noisy/restored image trajectories with multiple restoration algorithms and parameter settings, collect or use independent subjective ratings, and test whether FR and NR HSVD-based models correctly identify both the direction and magnitude of perceptual improvement. This would directly connect the present representation study to practical restoration monitoring and stopping criteria.

\section{Conclusion}
This work examined direct local SVD entropy as an interpretable structural representation for image quality assessment. The resulting HSVD map is computed from valid two-dimensional patches without flattening or delay embedding, and its local values are exactly invariant to right-angle rotation and reflection. The experiments show that this representation should be understood as a complementary spectral-complexity channel rather than as a replacement for conventional SSIM.

The controlled Lena experiment separated absolute clean-reference similarity from incremental change sensitivity: HSVD reacts more strongly to local corruption and retains a larger response between already heavily degraded neighboring states. KADID-10k then established the perceptual limits and value of that sensitivity. HSVD-SSIM alone was weaker than SSIM for absolute FR prediction, but adding HSVD to SSIM produced a stable increase in grouped cross-validated ranking accuracy. The same complementary effect persisted without a pristine reference, where HSVD-derived single-image statistics improved both linear and nonlinear quality-prediction models and improved the reconstruction of quality changes between neighboring distortion states.

Taken together, the results support a conservative but useful conclusion: local singular-value entropy captures a component of degradation that overlaps with, but is not exhausted by, conventional image-domain structural similarity. Its most promising role is therefore in fused FR/NR assessment, fine-grained monitoring of image-processing changes, and future multiscale or color-aware quality models. Validation on authentic distortions and real restoration trajectories, together with computational optimization and stronger NR baselines, is required before these applications can be considered established.

\section*{CRediT authorship contribution statement}
\textbf{Andrei Velichko:} Conceptualization, Methodology, Software, Formal analysis, Visualization, Writing - original draft, Project administration, Funding acquisition. \textbf{Petr Boriskov:} Validation, Discussion, Writing - original draft, Writing - review and editing.

\section*{Declaration of competing interest}
The authors declare that they have no known competing financial interests or personal relationships that could have appeared to influence the work reported in this paper.

\section*{Data and code availability}
The data and Python code supporting the findings of this study are available from the corresponding author upon reasonable request.

\section*{Funding}
This research was funded by the Russian Science Foundation, grant number 22-11-00055-P.

\printbibliography
\end{document}